\newtheorem{mytheorem}{Theorem}
\newtheorem{lem}{Lemma}
\title{Enabling the Multi-User Generalized Degrees of Freedom in the Gaussian Cellular Channel}
\author{
\IEEEauthorblockN{Rick Fritschek}
\IEEEauthorblockA{ Lehrstuhl f\"ur Informationstheorie 
    und \\ Theoretische
    Informationstechnik\\
    Technische Universit\"at Berlin, \\
    Einsteinufer 25,
    D--10587 Berlin, Germany\\
    Email: rick.fritschek@tu-berlin.de
}%
\and
\IEEEauthorblockN{Gerhard Wunder}
\IEEEauthorblockA{Fraunhofer
  Heinrich--Hertz--Institut\\
Wireless Communication and Networks \\
Einsteinufer 37, D--10587 Berlin, Germany\\
Email: gerhard.wunder@hhi.fraunhofer.de}

}
\begin{document}

\maketitle
\begin{abstract}
There has been major progress over the last decade in understanding the classical interference channel (IC).
Recent key results show that constant bit gap capacity results can be obtained from linear deterministic models (LDMs).
However, it is widely unrecognized that the time-invariant, frequency-flat cellular channel, which contains the IC as a special case,
possesses some additional generalized degrees of freedom (GDoF) due to multi-user operation. This was proved for the LDM cellular channel
very recently but is an open question for the corresponding Gaussian counterpart. In this paper, we close this gap and provide an achievable sum-rate
for the Gaussian cellular channel which is within a constant bit gap of the LDM sum capacity.
We show that the additional GDoFs from the LDM cellular channel carry over. This is enabled by signal scale alignment.
In particular, the multi-user gain reduces the interference by half in the 2-user per cell case compared to the IC.
\end{abstract}

\section{Introduction}

Starting with the work of Etkin, Tse, and Wang \cite{etkin2008} interest has been growing in constant-gap approximations of network information theory problems. In \cite{etkin2008} the interference channel (IC), a long standing problem which has not yet been solved, was investigated. The capacity region has been found within a constant gap of 1 bit. This astonishing result motivated constant bit-gap capacity investigations of other network topologies and models, e.g. \cite{avestimehr2009}, \cite{Bresler2010}, \cite{Bresler2008}, and \cite{Suvarup2011}. Deriving constant bit-gap results for Gaussian channel models can be a demanding problem. This is partly due to the noise properties of the channel models. A second branch of research therefore investigated approximate Gaussian channel models. This started with the work of Avestimehr, Diggavi, and Tse  \cite{Avestimehr2007}, in which the so-called linear deterministic model (LDM) was first defined and investigated. This network model approximates the Gaussian model by translating the real channel input signals in corresponding binary vectors by means of binary expansion. Power is represented by a shift of the bit vectors, and noise is introduced as a truncation of these vectors. These properties significantly reduce the complexity and difficulty of the model.
Not only does the LDM makes things easier to prove but also hints at correct intuition for its Gaussian counterpart. In \cite{Bresler2008} it was demonstrated that the achievable scheme for the linear deterministic IC (LD IC) is within a constant gap of 42 bit, of the Gaussian channel. Other successful transitions of LDM methods and insights into the Gaussian counterpart include \cite{Bresler2010}, \cite{Sridharan08}, \cite{Suvarup2011}, and \cite{Suvarup2012}. In many cases, the special properties of the LD models provide a possibility of interference alignment on the signal scale. The investigation of the Many-to-One channel \cite{Bresler2010}, was the first to convert this interference alignment to the Gaussian case. Here, lattice coding was used in addition to a layered encoding and successive decoding strategy. The strategy splits the total usable power in many power packages, where each one gets encoded separately by a lattice code. Also each package is viewed as a single Gaussian channel in the decoding process. Since lattice codes obey certain group properties, the addition of many interferers is seen as another codeword at the receiver. This enables interference alignment on the signal scale. The idea was picked up in \cite{Suvarup2011} and \cite{Suvarup2012} which investigated the mixed Many-to-One and One-to-Many models. In these investigations LDM interference alignment strategies were also translated to the Gaussian channel, resulting in new achievable rates. Another interesting model is the interfering MAC (IMAC) and interfering BC (IBC). These terms were defined in \cite{Suh2008}, were it was shown that, using delay properties, a form of multi-user gain can be achieved. Furthermore, with $k-$users, an interference-free transmission is proven by means of degrees of freedom. Another notable investigation is \cite{chaaban2011}, where the capacity region of the IMAC was characterized for several strong interference cases. Also an upper-bound was provided for the whole interference range. A different strategy was used in \cite{Fritschek2014}, where the LD IMAC and LD IBC were investigated. Based on the LD MAC-P2P investigations in \cite{Buhler2011} and \cite{Buhler2012}, multi-user gain was shown and the sum capacity has been found for the weak interference case. 

{\bf Contributions:} In this paper we convert the achievable scheme of \cite{Fritschek2014} to the Gaussian models by modifying the methods of \cite{Bresler2010} and \cite{Suvarup2011}. We show that with layered lattice codes, signal scale alignment can be achieved in the IMAC and IBC. We demonstrate that multi-user gain can be enabled which basically reduces the interference by half (2-user weak interference case) in comparison to the IC. To the best of our knowledge, this is the first demonstration of signal scale alignment and multi-user gain in time-invariant and frequency-flat cellular channel models.

\section{System Model} 
\subsection{IMAC Model}
We consider the Gaussian interfering multiple access channel (IMAC), in which there are two Gaussian MACs interfering among themselves.
There are four independent messages, two in every MAC cell. Each receiver also receives interference from the opposite MAC cell. An illustration of this model is given in \ref{system model}. 
The channel equations are given by 
\begin{equation}
Y^j = \sum_{i=1}^2 \sum_{k=1}^2 h_{ik}^j X_{ik} +Z^j
\end{equation}
where $i$ and $j$ denote the transmitter and receiver cell, respectively and $k$ is the user index.
$Z^j\in \mathcal{CN}(0,1)$ is assumed to be zero mean and unit variance Gaussian noise. Also each transmitted signal has an associated average power constraint $\mathbb{E}\{|X_{ik}|^2\}\leq P$. % We confine the model to a specific case for clarity of the exposition. In particular we restrict the model to the symmetric case, although the techniques given in this paper apply to the other cases as well. 

\begin{figure}
\includegraphics[width=0.47 \textwidth]{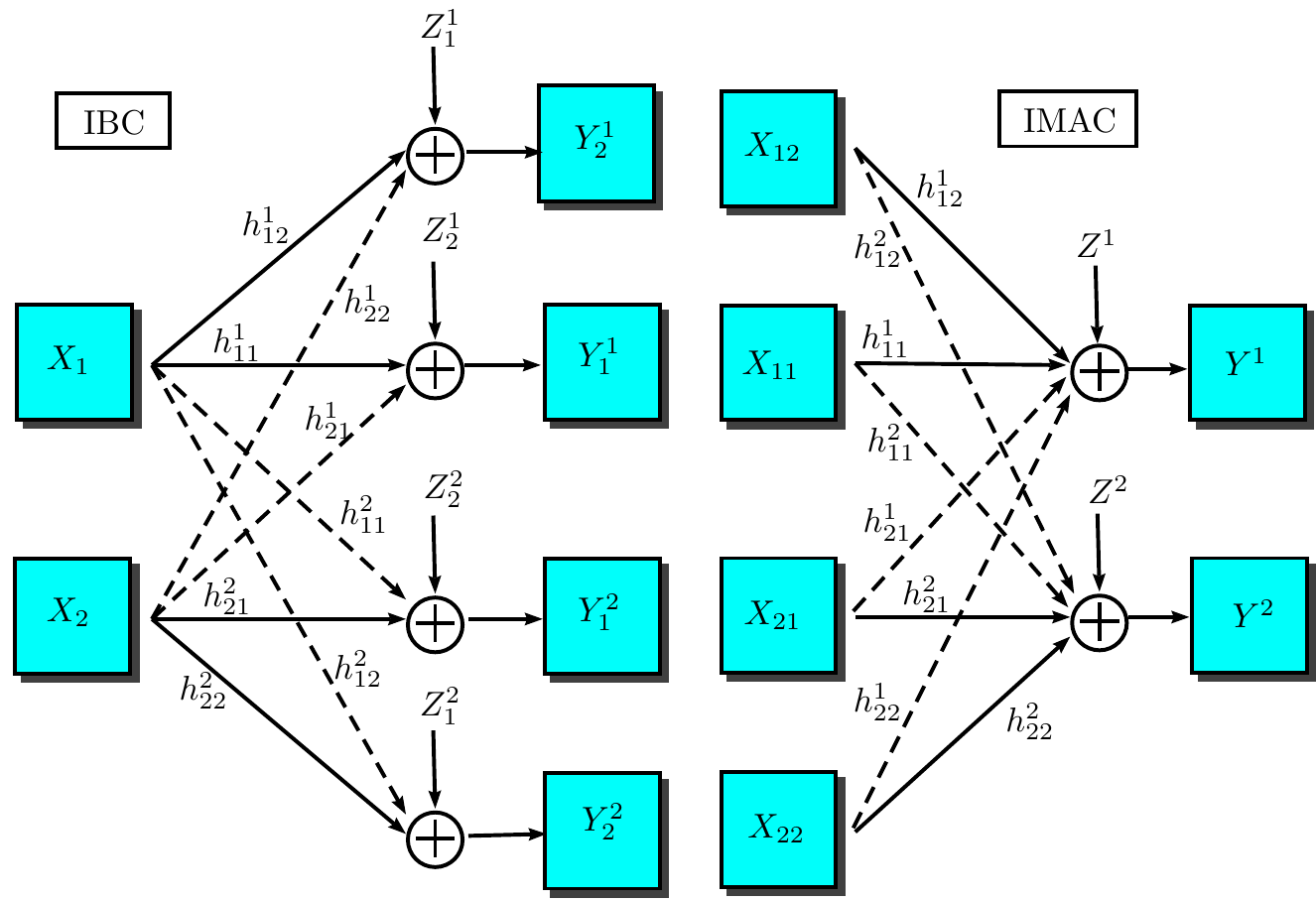}
	\caption{Illustration of the Gaussian IMAC and IBC systems.}
    \label{system model}
\end{figure}
% \begin{figure}
% \begin{minipage}[c]{\columnwidth}
%   \centering
%   \begin{minipage}[c]{0.5\columnwidth}
%   \centering
%   \includegraphics[width=0.8 \textwidth]{figs/IBC_gauss_ed}
%   %\caption{Gauss-IBC}
%   \end{minipage}\hspace{-.5cm}
%   \begin{minipage}[c]{0.5\columnwidth}
%   \centering
%   \vspace{1em}
%   \includegraphics[width=0.72 \textwidth]{figs/IMAC_Gauss}
%   %\caption{Gauss-IMAC}
%	\end{minipage}
%	\caption{Illustration of the Gaussian IMAC and IBC systems.}
%    \label{system model}
%    \end{minipage}
% \end{figure}

\subsection{IBC Model}
The Gaussian interference broadcast channel (IBC) consists of 2 transmitters and 4 receivers.
As in the IMAC case, there are four independent messages inside the network. Each BC transmitter sends a signal with two messages, one for each receiver of the respective BC-cell. Also every receiver is impaired by interference from the other cell. The IBC channel equations are therefore given by 
\begin{equation}
Y_k^j=\sum_{i=1}^2 h_{ik}^j X_{i} + Z_k^j
\end{equation}
where $Z_k^j\in \mathcal{CN}(0,1)$ is again a zero mean and unit variance Gaussian random variable and $X$ has an associated average power constraint $\mathbb{E}\{|X_{i}|^2\}\leq P$.

{\bf Interference Regime:}
From now on we assume without loss of generality that $h_{i1}^j\geq h_{i2}^j$ for $i=j$.
Also, we assume equal interference strength at the receivers: $h_{i1}^j=h_{i2}^j$ for $i\neq j$. This restriction is justified in the case, when the distance between the two cells is much bigger than the cell dimensions itself.
Furthermore we define two expressions, the signal-to-noise ratio and the interference-to-noise ratio as:
\begin{equation}
|h_{ik}^j|^2P=
\begin{cases}
\mbox{SNR}_{ik} & \text{if }  i=j
\\
\mbox{INR}_{i}^j & \text{if } i\neq j.
\end{cases}
\end{equation}
We also introduce two parameters $\alpha_i,\beta_i$ which combine these ratios with $\mbox{SNR}_{i1}\,=\,P_i$, $\mbox{SNR}_{i2}\,=\,P_i^{\beta_i}$ and $\mbox{INR}_{i}^j\,=\,P_i^{\alpha_j}$. These parameters correspond to $\alpha, \beta$ which are used in the LDM channel model \cite{Fritschek2014}.
Now we can restrict the investigation to the weak interference regime defined through $\alpha_i \in [0,\frac{1}{2}]$ and $\alpha_i < \beta_i$. Note that in the case of $\alpha_i \geq \beta_i$ the IMAC as well as the IBC relapse into the IC in terms of capacity. In both models we assume $P_i>1$. For convenience we use the standard terms: common and private signal, for the part which is seen at both cells and the part which is only received in the intended cell, respectively. Also note that the following techniques work for all channel parameters in the defined regime, except for a singularity at $\beta_i=1$ in which the additional gain for that corresponding cell will be zero.

\section{Linear deterministic approximation}
The LDM models the input symbols at Tx$_i$ as bit vectors $\mathbf{X}_i$. This is achieved by a binary expansion of the real input signal. The resulting bits constitute the new bit vector. The positions within the vector will be referred to as 'levels'. To model the signal impairment induced by noise, the bit vectors will be truncated at noise level and only the n most significant bits are received at $Rx_i$. This is done by shifting the incoming bit vector for $q-n$ positions $\mathbf{Y}=\mathbf{S}^{q-n}\mathbf{X}$, where $\mathbf{S}$ is the shift matrix.
% defined as 
%\begin{equation}
%\mathbf{S}=\begin{pmatrix}
%0 & 0 &  \cdots & 0 & 0\\
%1 & 0 &  \cdots & 0 & 0\\
%0 & 1 &  \cdots & 0 & 0\\
%\vdots & \vdots & \ddots & \vdots & \vdots \\
%0 & 0 &  \cdots & 1 & 0\\
%\end{pmatrix}.
%\end{equation}
Superposition at the receivers is modelled via binary addition of the incoming bit vectors on the individual levels. 
%Carry over is not used to limit the superposition on the specific level where it occurs. 
The channel gain is represented by $n_{ik}^j$-bit levels which corresponds to $\lceil\log |h_{ik}^j|^2 P\rceil$ of the original channel. 
With these definitions the LD IMAC model can be written as
\begin{equation}
\mathbf{Y}^j=\sum_{i=1}^2 \sum_{k=1}^2 \mathbf{S}^{q-n_{ik}^j} \mathbf{X}_{ik}
\end{equation}
and the LD IBC as
\begin{equation}
\mathbf{Y}_k^j=\sum_{i=1}^2 \mathbf{S}^{q-n_{ik}^j} \mathbf{X}_{i}
\end{equation}
where addition is binary.
The sum rate for both models can be upper bounded by 
\begin{equation}
R_\Sigma \leq n_{11}^1+n_{21}^2-\frac{1}{2} n_{21}^1-\frac{1}{2} n_{11}^2
\label{LD_Bound}
\end{equation}
which states that the interference is reduced by half, as proven in \cite{Fritschek2014}.
An achievable sum rate is given with
\begin{equation}
R_\Sigma \leq n_{12}^1+n_{22}^2-n_{21}^1-n_{11}^2 + \phi(n_{21}^1,\Delta_1) +\phi (n_{11}^2,\Delta_2),
\end{equation}
where $\Delta_i=n_{i1}^j-n_{i2}^j$ for $i=j$ is the difference of the direct signals.
The function $\phi$ for $p,q \in \mathbb{N}_0$, following the notation of \cite{Buhler2011}, is defined as
\begin{equation}
\phi(p,q):=
\begin{cases}
q+\frac{l(p,q)q}{2} & \text{if } l(p,q)\text{ is even,}
\\
p-\frac{(l(p,q)-1)q}{2} & \text{if } l(p,q)\text{ is odd},
\end{cases}
\label{phi}
\end{equation}
where $l(p,q):=\lfloor\frac{p}{q}\rfloor\ \mbox{for}\ q>0\ \mbox{and}\ l(p,0)=0$. The optimal construction of the coding matrices is a direct extension of the MAC-P2P scheme in \cite{Buhler2011}. 
The maximum sum rate is achieved by orthogonal coding. This means that there is either an independent bit or nothing send from a specific level. At the receiver there is no overlap with levels used by another transmitter. This technique can be interpreted as interference alignment, because the levels for the LD IMAC, are chosen in the following way. Both interference signals of a cell align at the receiver as much as possible in the unused levels of the direct transmission. In the LD IBC, levels of the common part are chosen such that they align at the receivers of the unintended cell as much as possible on levels, which receive the private message intended for the opposite receiver. See figure \ref{MAC_MAC_SCHEMA} for an example.
\begin{figure}
\centering
\includegraphics[scale=0.5]{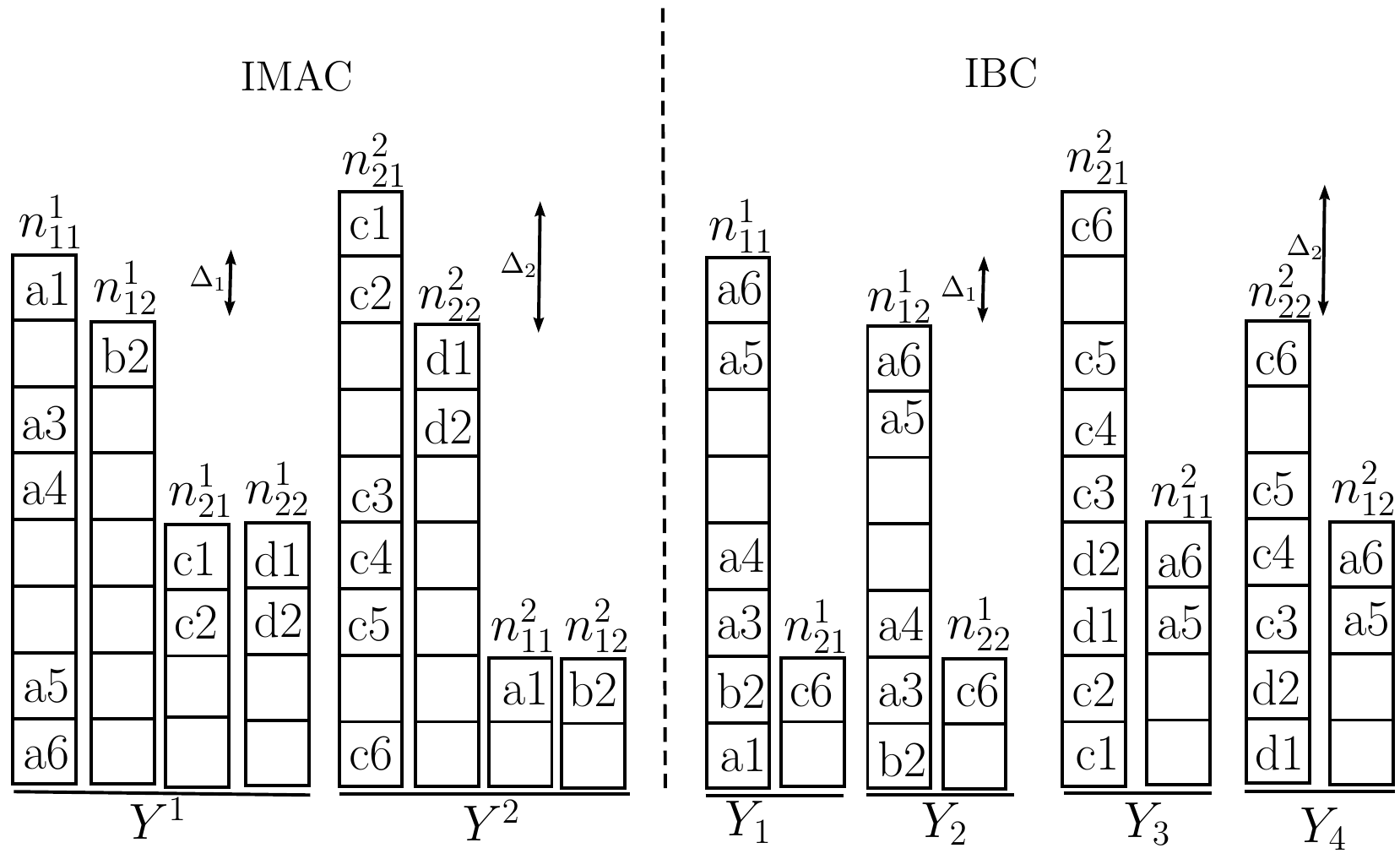}
\caption{{\bf LD IMAC:} An example for a LD IMAC scheme which achieves the upper bound is presented in the figure. The MAC-cell has $n_{11}^1=8$ bit levels and $n_{12}^1=7$ bit levels and generates interference, at the other side, of $n_{11}^2=n_{12}^2=2$ bit levels. Whereas the other MAC cell has $n_{21}^2=9 $ and $n_{22}^2= 7$ bit levels and generates $n_{21}^1=n_{22}^1=4$ bit levels interference.
{\bf LD IBC:} A coding scheme of the LD IBC system model is shown. The example is chosen such that it depicts the exact dual case to the LD IMAC model as in the previous example. The basic strategy to get a coding scheme for the LD IBC case out of the LD IMAC case is shown. The MAC components got merged and the coding vectors inverted.
{\bf Sum rate:}
Both schemes above yield a sum rate of 14 bit levels and therefore reach the upper bound which can be calculated with $R_{\Sigma}\leq n_{11}^1+n_{21}^2 -\frac{n_{21}^1}{2}- \frac{n_{11}^2}{2}$.
}
\vspace{-0.3cm}
\label{MAC_MAC_SCHEMA}
\end{figure}

\section{Coding Schemes for IBC and IMAC}
\subsection{Example of the symmetric restricted IMAC}
{\bf Power partitioning:\ }
In this example we consider the fairly restricted symmetric IMAC channel, where $\alpha=0.5$ and $\beta=0.75$ and both cells have a scaled direct channel gain of 1.
The power as observed at each receiver is partitioned into $P^{(1-\beta)}$ intervals. These intervals play the role of bit levels in the LDM. 
Signal power $\theta_l$ is defined as 
\begin{IEEEeqnarray}{rCl}
\theta_l & = & q_{l-1}-q_l\IEEEnonumber\\
&= & P_i^{1-(l-1)(1-\beta_i)}-P_i^{1-l(1-\beta_i)}
\label{theta_k_fractions}
\end{IEEEeqnarray}
with $l$ indicating the specific level (Fig. \ref{Powersplit}). Each user $k$ of cell $i$ decomposes its signal into a sum of independent sub-signals $X_{ik}=\sum\limits_{l=1}^{l_{\max}} X_{ik}(l)$. For every sub-signal a lattice code is chosen as described in \cite{Loeliger}, such that the spherical shaping region of the lattice has an average power per dimension of $\theta_l$ and is good for channel coding. Moreover, aligning sub-signals use the same code (with independent shifts).

\begin{figure}
\centering
\includegraphics[scale=.7]{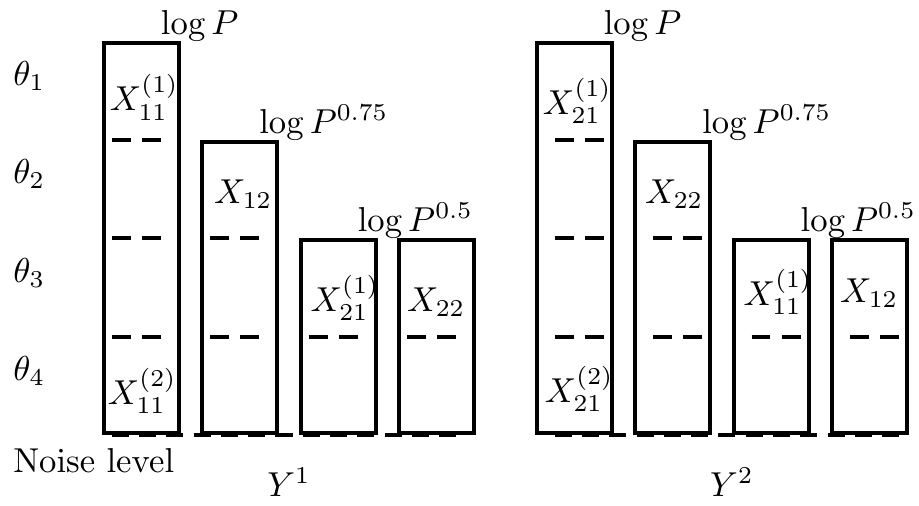}
\caption{Illustration of power partitioning, with the resulting 4 signal power levels and level use for coding}
\label{Powersplit}
\vspace{-1em}
\end{figure}

In \cite{Loeliger} it was shown that under the restriction of
\begin{equation}
R\leq \log \left(\frac{P}{N}\right),
\label{Decodingbound}
\end{equation}
a lattice code $(\gamma\Lambda_C+v) \cap S$ exists with arbitrary small error probability. This code consists of a lattice $\Lambda_C \in \mathbb{R}^n$, a scaling factor $\gamma \in \mathbb{R}$, a translation $v\in \mathbb{R}^n$ and a spherical shaping region $S\subset\mathbb{R}^n$ with power $P$ per dimension. $N$ denotes the noise variance per dimension. 

{\bf Decoding procedure:\ }
Decoding occurs per level, treating subsequent levels as noise. If a sub-signal was decoded it gets subtracted from the remaining signal and the process continues with the next level. In case of an interference-affected level, only the sum of both sub-signals gets decoded and subtracted. Because each level is treated as a Gaussian point-to-point channel, decodability is assured providing that the rate is chosen appropriately according to (\ref{Decodingbound}). With a signal power of $\theta_l$, it only remains to specify the total noise of each level, consisting of the Gaussian noise at the receiver and the signal power of all subsequent levels, including the interference. An achievable rate for each level in the example (Fig. \ref{Powersplit}) is therefore
\begin{IEEEeqnarray*}{rCl}
r_1&= &\log \left(\frac{P-P^{0.75}}{1+4P^{0.75}} \right)^+,\ r_2= \log \left(\frac{P^{0.75}-P^{0.5}}{1+4P^{0.5}} \right)^+\\
r_3&= &\log \left(\frac{P^{0.5}-P^{0.25}}{1+4P^{0.25}} \right)^+,\ r_4= \log \left(\frac{P^{0.25}-1}{5} \right)^+.
\end{IEEEeqnarray*}
The total achievable rate is the summation over all levels
\begin{equation}
R_\Sigma = 2\min(r_1,r_3)+2\min(r_2,r_3)+2r_4,
\end{equation}
where the minima are necessary, because $X_{11}(1),X_{12}$ and $X_{21}(1),X_{22}$ need to be decodable at both receivers. The total achievable rate is therefore 
\begin{IEEEeqnarray*}{rCl}
R_\Sigma &=& 4r_3+2r_4\\
&=&4\log \left(\frac{P^{0.5}-P^{0.25}}{1+4P^{0.25}} \right)^++2\log \left(\frac{P^{0.25}-1}{5} \right)\\
&>&4\log \left(\frac{P^{0.5}-P^{0.25}}{1+P^{0.25}} \right)-4 \log 4 +2\log \left(\frac{P^{0.25}-1}{5} \right)\\
&>&4\log \left(1+\frac{P^{0.5}-P^{0.25}}{1+P^{0.25}} \right)+2\log \left(\frac{P^{0.25}-1}{5} \right)\\
&&-\:4(1+ \log 4)\\
&=&4\log \left(\frac{P^{0.5}+1}{1+P^{0.25}} \right)+2\log \left(\frac{P^{0.25}-1}{5} \right)-4(1+ \log 4)\\
&>&4\log \left(\frac{P^{0.5}}{2P^{0.25}} \right) +2\log \left(P^{0.25} \right)-2(7+\log 5)\\
&=& 4\log P^{0.5}-2 \log P^{0.25} -2(7+\log 5)\\
&=& 2\log P -\log P^{0.5}-2(7+\log 5)
\end{IEEEeqnarray*}
where the result of lemma \ref{lemma_minterm} is used. With the definition of $\alpha=0.5$ and $n_{ik}^j=\lceil\log |h_{ik}^j|^2 P\rceil$ one can see, that the proposed scheme can achieve the upper bound in (\ref{LD_Bound}) within a constant gap of $2(7+\log 5)$ Bits. 

\begin{lem}
The decoding bound for the direct path rate $r_d$ is always greater than the decoding bound for the interference path rate $r_i$
\begin{equation}
r_{\text{d}}\geq r_{\text{i}}.
\end{equation}
\label{lemma_minterm}
\end{lem}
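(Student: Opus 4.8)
The plan is to reduce the statement to a one‑line monotonicity property of the per‑level signal‑to‑noise ratio along the power ladder of Fig.~\ref{Powersplit}. First I would make precise what the two quantities are. A \emph{common} sub‑signal $X_{ik}(l)$ is, at its intended receiver, decoded on a level that carries received power $\theta_{m}$ against an effective noise consisting of the Gaussian noise plus the aggregate power of everything sitting on strictly lower levels; by the alignment every level is occupied by at most a fixed number $c$ of mutually aligned equal‑power streams (whose lattice sum the receiver decodes), so the effective noise on the level of power $\theta_{m}$ is at most $1+c\,q_{m}$, where $q_{m}$ is the cumulative level power strictly below that level. Hence both decoding bounds have the form
\begin{equation*}
r_{\text{d}}=\log^{+}\!\Big(\tfrac{\theta_{m_{\text{d}}}}{1+c\,q_{m_{\text{d}}}}\Big),\qquad r_{\text{i}}=\log^{+}\!\Big(\tfrac{\theta_{m_{\text{i}}}}{1+c\,q_{m_{\text{i}}}}\Big).
\end{equation*}
The second ingredient is that, because we are in the weak regime ($\alpha_{i}\le\tfrac12$, hence $\mathrm{INR}<\mathrm{SNR}$), the copy of a common stream seen at the unintended receiver is strictly attenuated relative to the intended one, so by signal‑scale alignment it lands on a \emph{lower} level of the ladder, i.e.\ $m_{\text{i}}>m_{\text{d}}$ and consequently $q_{m_{\text{i}}}<q_{m_{\text{d}}}$.

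Now I would use the geometric structure of the ladder. Writing $\rho=P^{\,1-\beta}>1$ (valid for $\beta<1$; $\beta=1$ is precisely the excluded singularity), one has $\theta_{m}=q_{m-1}-q_{m}=(\rho-1)\,q_{m}$, so each per‑level bound equals $\log^{+}f(q_{m})$ with
\begin{equation*}
f(q)=\frac{(\rho-1)\,q}{1+c\,q},\qquad f'(q)=\frac{\rho-1}{(1+c\,q)^{2}}>0.
\end{equation*}
Thus $f$, and hence the per‑level decoding bound, is strictly increasing in the level power $q$. Since $q_{m_{\text{d}}}>q_{m_{\text{i}}}$ we get $f(q_{m_{\text{d}}})>f(q_{m_{\text{i}}})$ and therefore $r_{\text{d}}\ge r_{\text{i}}$; the clipping is harmless, for if $r_{\text{i}}=0$ the inequality is trivial, and otherwise $f(q_{m_{\text{i}}})>1$ forces $f(q_{m_{\text{d}}})>1$ and the inequality is in fact strict. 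Specialised to the running example this gives $r_{1}\ge r_{3}$ and $r_{2}\ge r_{3}$, so every minimum in $2\min(r_{1},r_{3})+2\min(r_{2},r_{3})+2r_{4}$ is attained at the lower‑level (interference‑path) term, which is exactly what collapses it to $4r_{3}+2r_{4}$.

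The hard part is not the inequality itself but the structural bookkeeping behind the two displayed forms. One has to verify that the signal‑scale alignment of the construction really places the interfering copy exactly onto one level of the ladder (rather than straddling two), that the effective noise at every occupied level is genuinely bounded by $1+c\,q_{m}$ with one and the same constant $c$ at the intended and the unintended receiver, and — when the two cells are not symmetric, so that their ladders differ — that the alignment still yields $q_{m_{\text{d}}}>q_{m_{\text{i}}}$ (for $P_{1}=P_{2}$ and $\beta_{1}=\beta_{2}$, as in the example, the two ladders coincide and this is immediate). Granting those facts, the monotonicity of $f$ closes the argument in one step.
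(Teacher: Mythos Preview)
Your argument is correct and reaches the same conclusion, but the paper's proof is shorter and sidesteps the structural bookkeeping you flag at the end. Rather than viewing the interfering copy as sitting on a lower rung $m_{\text{i}}$ of the same ladder and invoking monotonicity of $f(q)=(\rho-1)q/(1+cq)$, the paper keeps the transmit-side level $l$ fixed and observes that at the unintended receiver both the useful power $\theta_l$ and the residual interference-plus-signal power $q_l$ are multiplied by the same path attenuation $P^{\alpha-1}$, whereas the additive Gaussian noise term $1$ is not. Hence $r_{\text{i}}=\log\tfrac{P^{\alpha-1}\theta_l}{1+4P^{\alpha-1}q_l}=\log\tfrac{\theta_l}{P^{1-\alpha}+4q_l}$, and a one-line subtraction gives $r_{\text{d}}-r_{\text{i}}=\log\tfrac{P^{1-\alpha}+4q_l}{1+4q_l}\ge 0$ simply because $P^{1-\alpha}\ge 1$ for $\alpha\le 1$, $P\ge 1$. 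In particular the paper's computation never needs the interfering copy to land \emph{exactly} on a rung of the receiver's ladder, so the first of your caveats is unnecessary there. Your monotonicity route is equivalent whenever the alignment is exact (then $P^{\alpha-1}q_l=q_{m_{\text{i}}}$ and both arguments compare the same two numbers) and arguably makes more transparent why the example collapses to $4r_3+2r_4$, at the cost of the extra verification steps you list.
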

\vspace{-0.3cm}
\begin{proof}
{\bf IMAC:}
A decoding bound for the direct rate is given as $r_d =\log \left( \frac{\theta_l}{1+4q_l}\right)$.\\
The observed interference power is therefore $P^{\alpha-1}\theta_l$ with a corresponding noise of $1+4q_lP^{\alpha-1}$.
Hence, the decoding bound for the interference rate is
\begin{IEEEeqnarray*}{rCl}
r_i &=&\log \left( \frac{P^{\alpha-1}\theta_l}{1+4q_lP^{\alpha-1}}\right)=\log \left( \frac{\theta_l}{P^{1-\alpha}+4q_l}\right).
\end{IEEEeqnarray*}

Subtracting both rates one gets 
\begin{IEEEeqnarray*}{rCl}
r_d-r_i &=&\log \left( \frac{\theta_l}{1+4q_l}\right)-\log \left( \frac{\theta_l}{P^{1-\alpha}+4q_l}\right)\\
		& = & \log \left( \frac{P^{1-\alpha}+4q_l}{1+4q_l}\right) \geq 0,
\end{IEEEeqnarray*}
since $\alpha \leq 1$, $P\geq 1$ and therefore $P^{1-\alpha} \geq 1$.
The IBC case follows on the same lines.
\end{proof}

\subsection{IMAC / IBC - The general (weak) interference case}
We proceed in parallel to section IV with the general case.
In the general case, $\beta_i$ and $\alpha_i$ can be any value in the defined regime and therefore any number of levels can be needed.
The power splitting is done as in the example with $P^{(1-\beta_i)}$ intervals and signal power is given by (\ref{theta_k_fractions}).
The choice of codeword decomposition and level usage is dependent on the underlying LDM scheme.
As in section IV, the sub-signal codewords can be decoded providing a rate of (\ref{Decodingbound}).
It remains to specify the effective noise per level for the IMAC and IBC case.
The effective noise becomes 
\begin{equation}
N_i(l)=1+uq_l=1+uP_i^{1-l(1-\beta_i)},
\label{noise_term}
\end{equation}
where $u=2$ and 4 for the IBC and IMAC, respectively. Henceforth, $i\neq j$ for $i,j\in \{1,2\}$ in all equations.
For the case of $\alpha < 0.5$, there is additional available power in the private part above $P_i^{\alpha_j}$. The additional rate can be expressed with
\begin{equation}
R_{A_i}=
\begin{cases}
R_{f_i} &\text{ if } \alpha < 0.5
\\
0 &  \text{ if } \alpha = 0.5,
\end{cases}
\end{equation}
where
\begin{IEEEeqnarray*}{rCl}
R_{f_i}&=& \log \left(\frac{P_i^{1-\lfloor L_j \rfloor(1-\beta_i)}-P_j^{\alpha_i}}{1+uP_j^{\alpha_i}} \right)^+\\
&>&\log \left(\frac{P_i^{1-\lfloor L_j \rfloor(1-\beta_i)}}{2P_j^{\alpha_i}} \right) - (1+\log u)\\
&= &\log P_i^{1-\lfloor L_j \rfloor(1-\beta_i)}-\log P_j^{\alpha_i} -(2+\log u).
\end{IEEEeqnarray*}
Furthermore, if $L_i=\frac{\alpha_i}{1-\beta_j}\notin \mathbb{N}$, the alignment structure has a remainder term allocated at the lowest power level. The additional rate for the part is dependent on the alignment structure and can be written as
\begin{equation}
R_{\text{R}_i}:=
\begin{cases}
P_i^{\alpha_j-\lfloor L_j \rfloor(1-\beta_i)}-1 & \text{if } \lfloor L_j \rfloor\text{ is odd,}
\\
0 & \text{if } \lfloor L_j \rfloor\text{ is even},
\end{cases}
\end{equation}
because in the even case the additional signal levels cannot be used due to the specific alignment scheme.
The rates for each level inside the alignment structure can be expressed as

\begin{IEEEeqnarray*}{rCl}
R_{\text{I}_i}(l) &=& \log \left( \frac{\theta_l}{1+uq_l}\right)^+\\
&=& \log \left( \frac{P_i^{1-(l-1)(1-\beta_i)}-P_i^{1-l(1-\beta_i)}}{1+uP_i^{1-l(1-\beta_i)}}\right)^+.
\end{IEEEeqnarray*}

\begin{mytheorem}
The overall achievable sum rate $R_\Sigma$ is 
\begin{IEEEeqnarray*}{rCl}
R_{\Sigma}&>&\log P_1^{\beta_1} -\log P_1^{\alpha_2}+\log P_2^{\beta_2} -\log P_2^{\alpha_1}\\
&&+\:\phi(\log P_1^{\alpha_2},\log P_1^{(1-\beta_1)})+\phi(\log P_2^{\alpha_1},\log P_2^{(1-\beta_2)})\\
&&-\:9-6\lfloor L_2 \rfloor-6\lfloor L_1 \rfloor
\end{IEEEeqnarray*}

with $\phi$ defined as in (\ref{phi}). Note that l(p,q) is equivalent to $\lfloor L_i \rfloor$.

\label{mytheorem1}
\end{mytheorem}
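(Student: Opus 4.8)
The plan is to lift the optimal linear-deterministic scheme of \cite{Fritschek2014} to the Gaussian channel by the layered-lattice technique of \cite{Bresler2010} and \cite{Suvarup2011}, mirroring the worked example of Section IV, and then to account carefully for every gap introduced in the translation. First I would fix the power partition $\{\theta_l\}$ from (\ref{theta_k_fractions}) and write $X_{ik}=\sum_l X_{ik}(l)$, each $X_{ik}(l)$ a good-for-channel-coding lattice codeword of per-dimension power $\theta_l$ obtained from \cite{Loeliger}. The assignment of sub-signals to levels copies the optimal LDM coding matrices: each level carries either one independent sub-signal or nothing, and all sub-signals that the LDM places on a common level at an unintended receiver are assigned the \emph{same} lattice $\Lambda_C$ with independent dithers, so their superposition is again a $\Lambda_C$-point. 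This realizes signal-scale alignment.

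Next I would run successive decoding from the top level downward, subtracting each decoded sub-signal and treating all lower levels --- including the incoming interference --- as Gaussian noise. Each decoded level is then a point-to-point Gaussian channel and, by (\ref{Decodingbound}) with effective noise $N_i(l)$ from (\ref{noise_term}) ($u=4$ for the IMAC), it supports rate $R_{\text{I}_i}(l)=\log\!\big(\theta_l/N_i(l)\big)^{+}$. A common sub-signal must additionally be decodable at the unintended receiver, where it arrives attenuated by $P_i^{\alpha_j-1}$; here Lemma \ref{lemma_minterm} is the workhorse: the direct-path constraint is never the binding one, so a common sub-signal still attains its interference-path rate and the minima appearing in the rate expression (as in Section IV) cost nothing beyond the per-level noise penalty. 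The private portion above $P_j^{\alpha_i}$, present only when $\alpha_i<1/2$, contributes $R_{A_i}$, and the lowest, alignment-remainder level contributes $R_{\text{R}_i}$.

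I would then sum everything: $R_\Sigma$ is a sum of terms of the forms $R_{A_i}$, $R_{\text{R}_i}$ and $R_{\text{I}_i}(l)$, with multiplicities and level ranges (the $\lfloor L_j\rfloor$ levels of cell $i$'s alignment structure) read off directly from the LDM coding matrices. Using $\theta_l=q_{l-1}-q_l$ together with the elementary estimate $\log\frac{q_{l-1}-q_l}{1+uq_l}>\log\frac{q_{l-1}}{2q_l}-(1+\log u)$, valid because $\beta_i<1$ forces $q_l\ge 1$, the per-cell sum of the $R_{\text{I}_i}(l)$ telescopes and collapses --- up to a constant of at most $1+\log u$ per appearance --- to exactly $\phi\!\big(\log P_j^{\alpha_i},\log P_j^{(1-\beta_j)}\big)$, which is the Gaussian counterpart of the LDM fact that the aligned (orthogonal) scheme attains $\phi(p,\Delta)$. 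Combining this with the $R_{A_i}$ and $R_{\text{R}_i}$ pieces and the surviving direct terms $\log P_i^{\beta_i}-\log P_i^{\alpha_j}$, and then collecting all constants --- the $(\cdot)^{+}$ truncations, the $\log(a-b)$ estimates, and $1+\log u$ for each of the $\lfloor L_1\rfloor+\lfloor L_2\rfloor$ alignment levels counted with multiplicity --- gives the stated $-9-6\lfloor L_2\rfloor-6\lfloor L_1\rfloor$. The IBC case goes through on identical lines by the duality of Fig.\ \ref{MAC_MAC_SCHEMA} (the MAC components merge, the coding vectors invert, and $u=2$ rather than $4$), with no larger constants.

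The main obstacle is this last bookkeeping: showing that the layered-lattice sum inside the alignment structure genuinely reproduces $\phi(p,\Delta)$ with only a per-level constant loss in \emph{both} parities of $\lfloor L_i\rfloor$ (the even and odd branches of (\ref{phi}) make different signal parts rate-free), while simultaneously verifying --- through Lemma \ref{lemma_minterm} --- that the two-sided decodability of the common sub-signals injects exactly the floor terms $\lfloor L_i\rfloor$, and nothing coarser, into the overall gap.
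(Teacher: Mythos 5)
Your proposal follows essentially the same route as the paper: the same power partition $\theta_l=q_{l-1}-q_l$, layered lattice codes with shared lattices on aligning levels, successive per-level decoding with effective noise $1+uq_l$, Lemma~\ref{lemma_minterm} to resolve the direct/interference decodability minima in favour of the interference-path rate, and the per-level estimate $\log\frac{q_{l-1}-q_l}{1+uq_l}>\log\frac{q_{l-1}}{2q_l}-(1+\log u)$ feeding a telescoping sum that, together with $R_{A_i}$ and $R_{R_i}$, reassembles into the $\phi$ terms with a constant-per-level loss. The only cosmetic difference is attribution---in the paper the $\phi(\cdot,\cdot)$ expressions emerge from regrouping the alignment-structure sum \emph{together with} the $R_{A_i}$ terms rather than from $\sum_l R_{\text{I}_i}(l)$ alone---and the remaining ``bookkeeping obstacle'' you flag is exactly the chain of inequalities (step (a) and its even/odd variants) that constitutes the body of the paper's proof.
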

\begin{proof}
The total achievable sum rate is the summation over the interference alignment structure $R_{\text{I}_i}$, and additional parts $R_{\text{A}_i}$ and $R_{\text{R}_i}$. 
We show the proof exemplary for the IMAC channel in case that $\alpha<0.5$ and $\lfloor L \rfloor=even$. The IBC proof follows on the same lines. Also note that the rates for the alignment structure are solely of the parts below $P_i^{\alpha_j}$. This is necessary for ensuring decodability of both, direct and interference sub-signals (see lemma 1).
\begin{IEEEeqnarray*}{rCl}
R_{\Sigma}&=& \sum\limits_i^{2}  R_{\text{I}_i}+R_{\text{A}_i}+R_{\text{R}_i}\\
&=& \sum\limits_l^{\lfloor L_2 \rfloor} \mathbb{1}_{l: \scriptscriptstyle\rm odd\  }2 \log \left(\frac{P_1^{\alpha_2-1}\theta_l }{1+4q_lP_1^{\alpha_2-1}} \right)^+\\
&&+\:\mathbb{1}_{l: \scriptscriptstyle\rm even\ } \log_2 \left(\frac{P_1^{\alpha_2-1}\theta_l }{1+4q_lP_1^{\alpha_2-1}} \right)^+\\
&&+\:\sum\limits_l^{\lfloor L_1 \rfloor} \mathbb{1}_{l: \scriptscriptstyle\rm odd\  }2 \log \left(\frac{P_2^{\alpha_1-1}\theta_l }{1+4q_lP_2^{\alpha_1-1}} \right)^+\\
&&+\:\mathbb{1}_{l: \scriptscriptstyle\rm even\ } \log_2 \left(\frac{P_2^{\alpha_1-1}\theta_l }{1+4q_lP_2^{\alpha_1-1}} \right)^+ +\sum\limits_i^{2} R_{A_i}+R_{R_i}\\
&\overset{(a)}{>}&\log P_1^{\lfloor L_2 \rfloor(1-\beta_1)} +\frac{\lfloor L_2 \rfloor}{2}\log P_1^{(1-\beta_1)}+\log P_2^{\lfloor L_1 \rfloor(1-\beta_2)} \\
&&+\: \frac{\lfloor L_1 \rfloor}{2}\log P_2^{(1-\beta_2)}+\sum\limits_i^{2} R_{A_i}+R_{R_i}\\
&>&\log P_1^{\lfloor L_2 \rfloor(1-\beta_1)} +\frac{\lfloor L_2 \rfloor}{2}\log P_1^{(1-\beta_1)}+\log P_2^{1-\lfloor L_1 \rfloor(1-\beta_2)} \\
&&+\:\frac{\lfloor L_1 \rfloor}{2}\log P_2^{(1-\beta_2)}+\log P_1^{1-\lfloor L_2 \rfloor(1-\beta_1)}-\log P_1^{\alpha_2} \\
&&-\:\log P_2^{\alpha_1}+\log P_2^{\lfloor L_1 \rfloor(1-\beta_2)}-8+\sum\limits_i^2 R_{R_i}-6\lfloor L_i \rfloor\\
&=&\log P_1^{\beta_1} -\log P_1^{\alpha_2}+\log P_2^{\beta_2} -\log P_2^{\alpha_1}\\
&&+\:\phi(\log P_1^{\alpha_2},\log P_1^{(1-\beta_1)})+\phi(\log P_2^{\alpha_1},\log P_2^{(1-\beta_2)})\\
&&-\:8-6\lfloor L_2 \rfloor-6\lfloor L_1 \rfloor
\end{IEEEeqnarray*}

where (a) follows with 
\begin{IEEEeqnarray*}{rCl}
R_{\text{I}_i}&= & \sum\limits_l^{\lfloor L_j \rfloor} \mathbb{1}_{l: \scriptscriptstyle\rm odd\ }2 \log \left(\frac{P_i^{\alpha_j-(l-1)(1-\beta_i)}-P_i^{\alpha_j-l(1-\beta_i)}}{1+4P_i^{\alpha_j-l(1-\beta_i)}} \right)^+\\
&&+\:\mathbb{1}_{l: \scriptscriptstyle\rm even\ } \log_2 \left(\frac{P_i^{\alpha_j-(l-1)(1-\beta_i)}-P_1^{\alpha_j-l(1-\beta_i)}}{1+4P_i^{\alpha_j-l(1-\beta_i)}} \right)^+\\
&>&-4.5\lfloor L_j \rfloor+\sum\limits_k^{\lfloor L_j \rfloor}  \log \left(\frac{P_i^{\alpha_j-(k-1)(1-\beta_i)}}{2P_i^{\alpha_j-k(1-\beta_i)}} \right) \\
&&+\:\mathbb{1}_{l: \scriptscriptstyle\rm odd\ } \log \left(\frac{P_i^{\alpha_j-(l-1)(1-\beta_i)}}{2P_i^{\alpha_j-l(1-\beta_i)}} \right)\\
&=&-5.5\lfloor L_j \rfloor+\log P_i^{\lfloor L_j \rfloor(1-\beta_i)} \\
&&+\:\sum\limits_l^{\lfloor L_j \rfloor}\mathbb{1}_{l: \scriptscriptstyle\rm odd\ } \log \left(\frac{P_i^{\alpha_j-(l-1)(1-\beta_i)}}{2P_i^{\alpha_j-l(1-\beta_i)}} \right)\\
&\overset{(*)}{>}&\log P_i^{\lfloor L_j \rfloor(1-\beta_i)} +\frac{\lfloor L_j \rfloor}{2}\log P_i^{(1-\beta_i)}-6\lfloor L_j \rfloor
\end{IEEEeqnarray*}
and (*) is valid for the case $\lfloor L_j \rfloor=$ even. The odd case is $R_{\text{I}_i}>\log P_i^{\lfloor L_j \rfloor(1-\beta_i)} +\frac{\lfloor L_j \rfloor+1}{2}\log P_i^{(1-\beta_i)}-6\lfloor L_j \rfloor+1$.
\end{proof}

One can see that the achievable rate of the Gaussian channel is within a constant-gap of $2(2+\log 4)+6\lfloor L_1 \rfloor+6\lfloor L_2 \rfloor$ of the LDM rate using the correspondence $n_{ik}^j=\lceil\log |h_{ik}^j|^2 P\rceil$. See figure \ref{system_model} for a graphical comparison.

\begin{figure}
\includegraphics[width=0.45 \textwidth]{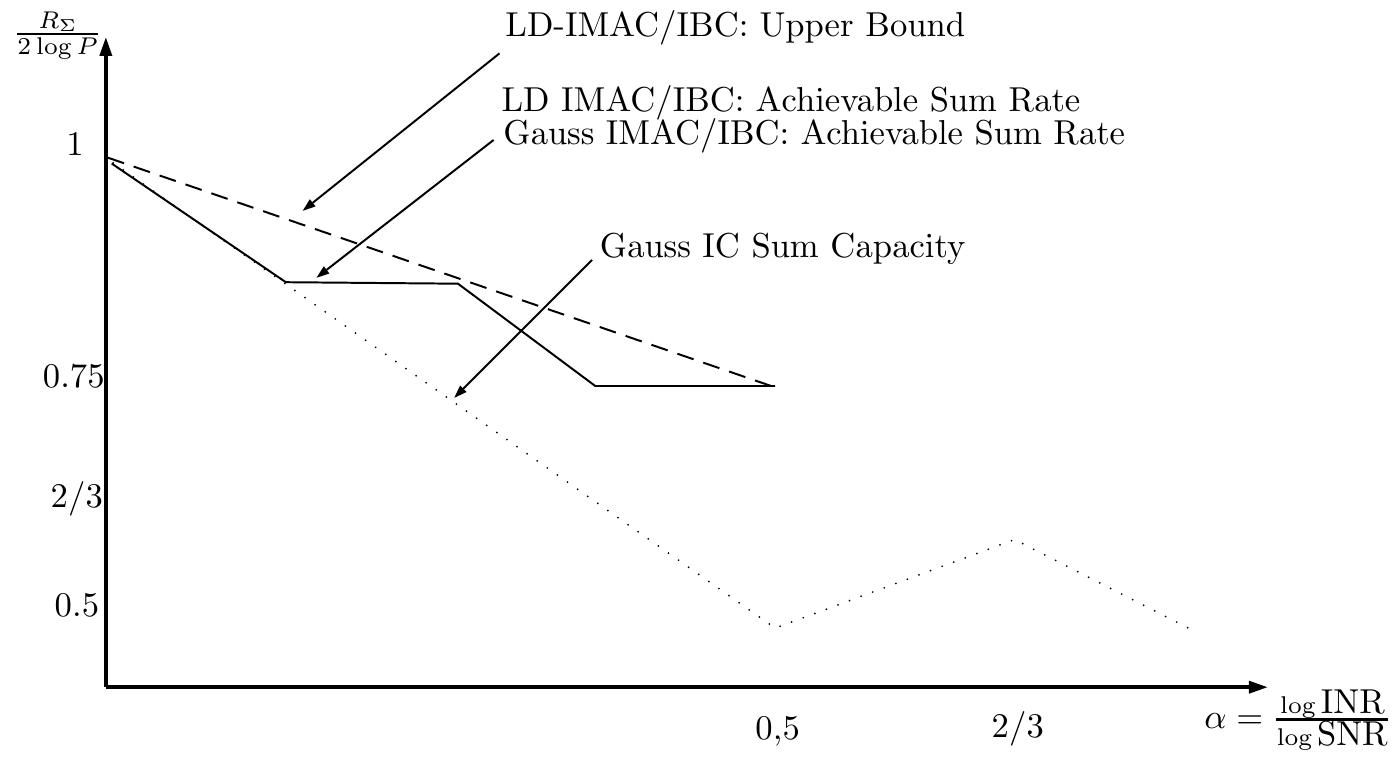}
	\caption{Illustration of the achievable rate in comparison to the LD case. Direct signals and interference is assumed to be symmetrical for clear presentation and $\beta=7/8$.
	At $\lfloor L \rfloor=L=\text{even}$, the achievable scheme reaches the LDM upper bound within a constant gap. }
    \label{system_model}
\end{figure}

\section{Conclusions}
We have demonstrated a technique for signal scale alignment in cellular multi-user networks. We provide the achievable rate for the IMAC and IBC. The rate gap is dependent on the number of power-levels and therefore on the strength-difference between the direct signals. This is consistent with the intuition, since for the limit case of $P_i=P_i^{\beta_i}$, the cellular multi-user model relapses into the IC. In that case there is no power difference to exploit for signal scale alignment. Furthermore, we have demonstrated that multi-user gain is achievable in time and frequency constant models. Multi-user gain reduces the interference by half in the 2-user case in comparison to the IC. 
%In the k-user case, interference would be one-k th of the original strength, but a growing k would also need at least as many power-levels. This would result in a trade-off between power-level efficiency and multi-user gain. 
Moreover, straight forward extensions of the techniques in \cite{Bresler2008} show, that the sum rate of the Gaussian IMAC and IBC channel can be upper bounded by the LDM channel bounds within a constant gap.
We believe that this investigation provides crucial insights for cellular networks in general and can be helpful in future investigations. 

%Since the achievable rate is within a constant bit-gap of the LDM sum-capacity, further research will treat upper bounds in regard of constant bit-gap exploration.

%The gap is depend on the number of power levels and grows with growing number of bit-levels. This is in accordance with the intuition since a smaller difference between the two direct signals means a higher number of bit levels. And for the limit case of equally strong direct signals the linear deterministic IMAC falls back to the IC case, in which no multi-user gain exists. One has to note that this gap can be partially mitigated with using more complex lattice codes and with more sophisticated treatment of the noise in the rate bounds. Since we are only interested to show an example of how to convert a LDM cellular system strategy to an equivalent Gaussian strategy and therefore showing that multi-user gain is possible in time and frequency constant channels. We only consider this case as the starting step for a constant gap approximation.

\bibliographystyle{./IEEEtran}
\bibliography{./ref}

% Generated by IEEEtran.bst, version: 1.13 (2008/09/30)
\begin{thebibliography}{10}
\providecommand{\url}[1]{#1}
\csname url@samestyle\endcsname
\providecommand{\newblock}{\relax}
\providecommand{\bibinfo}[2]{#2}
\providecommand{\BIBentrySTDinterwordspacing}{\spaceskip=0pt\relax}
\providecommand{\BIBentryALTinterwordstretchfactor}{4}
\providecommand{\BIBentryALTinterwordspacing}{\spaceskip=\fontdimen2\font plus
\BIBentryALTinterwordstretchfactor\fontdimen3\font minus
  \fontdimen4\font\relax}
\providecommand{\BIBforeignlanguage}[2]{{%
\expandafter\ifx\csname l@#1\endcsname\relax
\typeout{** WARNING: IEEEtran.bst: No hyphenation pattern has been}%
\typeout{** loaded for the language `#1'. Using the pattern for}%
\typeout{** the default language instead.}%
\else
\language=\csname l@#1\endcsname
\fi
#2}}
\providecommand{\BIBdecl}{\relax}
\BIBdecl

\bibitem{etkin2008}
R.~Etkin, D.~Tse, and H.~Wang, ``Gaussian interference channel capacity to
  within one bit,'' \emph{IEEE Transactions on Information Theory}, vol.~54,
  no.~12, pp. 5534--5562, 2008.

\bibitem{avestimehr2009}
A.~Avestimehr, A.~Sezgin, and D.~Tse, ``Capacity of the two-way relay channel
  within a constant gap,'' \emph{European Transactions on Telecommunications},
  vol.~19, no.~4, pp. 333--354, 2008.

\bibitem{Bresler2010}
G.~Bresler, A.~Parekh, and D.~Tse, ``{The Approximate Capacity of the
  Many-to-One and One-to-Many Gaussian Interference Channels},'' \emph{IEEE
  Transactions on Information Theory}, vol.~56, no.~9, pp. 4566--4592, 2010.

\bibitem{Bresler2008}
G.~Bresler and D.~Tse, ``The two-user gaussian interference channel: a
  deterministic view,'' \emph{European Transactions on Telecommunications},
  vol.~19, no.~4, pp. 333--354, 2008.

\bibitem{Suvarup2011}
S.~Saha and R.~A. Berry, ``Sum-capacity of a class of k-user gaussian
  interference channels within $o(k \log k)$ bits,'' \emph{Allerton Conf.
  2011}, 2011.

\bibitem{Avestimehr2007}
S.~Avestimehr, S.~Diggavi, and D.~Tse, ``A deterministic approach to wireless
  relay networks,'' in \emph{Proc. Allerton Conference on Communication,
  Control, and Computing}, Monticello, IL, 2007.

\bibitem{Sridharan08}
S.~Sridharan, A.~Jafarian, S.~Vishwanath, S.~A. Jafar, and S.~Shamai, ``A
  layered lattice coding scheme for a class of three user gaussian interference
  channels,'' \emph{Allerton Conf. 2008}, 2008.

\bibitem{Suvarup2012}
S.~Saha and R.~A. Berry, ``Symmetric k-user gaussian interference channels:
  Approximate sum-capacity via deterministic modeling,'' \emph{Allerton Conf.
  2012}, 2012.

\bibitem{Suh2008}
C.~Suh and D.~Tse, ``Interference alignment for cellular networks,'' in
  \emph{Proc. Allerton Conference on Communication, Control, and Computing},
  Monticello, IL, 2008.

\bibitem{chaaban2011}
A.~Chaaban, A.~Sezgin, B.~Bandemer, and A.~Paulraj, ``On gaussian multiple
  access channels with interference: Achievable rates and upper bounds,'' in
  \emph{Multiple Access Communications}.\hskip 1em plus 0.5em minus 0.4em\relax
  Springer Berlin Heidelberg, 2011, vol. 6886, pp. 87--96.

\bibitem{Fritschek2014}
R.~Fritschek and G.~Wunder, ``Upper bounds and duality relations of the linear
  deterministic sum capacity for cellular systems,'' in \emph{Proc. IEEE
  International Conference on Communications (ICC)}, Sydney, Australia, 2014,
  accepted for publication.

\bibitem{Buhler2011}
J.~B\"{u}hler and G.~Wunder, ``On interference alignment and the deterministic
  capacity for cellular channels with weak symmetric cross links,'' in
  \emph{IEEE International Symposium on Information Theory Proceedings (ISIT)},
  Saint-Petersburg, Russia, 2011.

\bibitem{Buhler2012}
------, ``The multiple access channel interfering with a point to point link:
  Linear deterministic sum capacity,'' in \emph{IEEE International Conference
  on Communications (ICC)}, Ottawa, Canada, 2012.

\bibitem{Loeliger}
H.-A. Loeliger, ``Averaging bounds for lattices and linear codes,'' \emph{IEEE
  Transactions on Information Theory}, vol.~43, no.~6, 1997.

\end{thebibliography}

\end{document}